\theoremstyle{definition} 
\theoremstyle{definition} 
\newtheorem {theorem} {Theorem}
\newcommand{\kb}[1]{\mathbf{[#1]}}
\newcommand{\bk}[1]{\braket{#1|#1}}
\newcommand{\MR}{\texttt{Measure and Resend}}
\newcommand{\R}{\texttt{Reflect}}
\newcommand{\ba}{\bar{a}}
\newcommand{\prf}{p^{A\rightarrow B}}
\newcommand{\prr}{p^{A\rightarrow A}}
\newcommand{\pe}{p^{AE}}
\begin{document}
\title{From Classical to Semi-Quantum Secure Communication}
\author{
\IEEEauthorblockN{Allison Gagliano}
\IEEEauthorblockA{Departments of Mathematics \& Computer Science\\
Eastern Connecticut State University\\
Willimantic, CT 06226}
\and
\IEEEauthorblockN{Walter O. Krawec and Hasan Iqbal}
\IEEEauthorblockA{Computer Science \& Engineering Department\\
University of Connecticut\\
Storrs, CT 06268\\
Email: walter.krawec@uconn.edu}
}

\maketitle
\begin{abstract}
In this work we introduce a novel QKD protocol capable of smoothly transitioning, via user-tuneable parameter, from classical to semi-quantum in order to help understand the effect of quantum communication resources on secure key distribution.  We perform an information theoretic security analysis of this protocol to determine what level of ``quantumness'' is sufficient to achieve security, and we discover some rather interesting properties of this protocol along the way.
\end{abstract}

\section{Introduction}
A \emph{semi-quantum key distribution} (SQKD) protocol's goal is similar to that of a \emph{quantum key distribution} (QKD) protocol, namely the establishment of a secret key between two parties, Alice ($A$) and Bob ($B$), secure against an all-powerful adversary Eve ($E$).  Semi-quantum cryptography, first introduced in 2007 by Boyer et al., in \cite{SQKD-first-PRL}, imposes the restriction, however, that one of the users (typically $B$), is limited to being ``classical'' or ``semi-quantum.''  This restriction implies $B$ is limited to working only in the computational $Z$ basis (spanned by states $\ket{0}$ and $\ket{1}$).  He may not measure or prepare states in any other basis (we will discuss the exact capabilities of $B$ later in this paper).

The primary interest of these protocols is to help answer the question ``how quantum must a protocol be to gain an advantage over a classical one?'' We know that, if both parties are classical, key distribution is impossible unless computational assumptions are made. Thus, the question semi-quantum protocols seek to help answer is: what quantum resources are required to attain unconditional security?  However, besides removing certain key quantum capabilities from the two users, there has not been a semi-quantum protocol that can smoothly transition from classical to quantum allowing us to study the effects of quantum communication on secure key distribution.

In this paper, we propose such a protocol and analyze its properties. We introduce a novel SQKD protocol with a user-tuneable parameter $\alpha$ allowing one to, in a way, set the level of ``quantumness'' of the entire protocol.  Indeed, when $\alpha = 0$, the protocol collapses to a classical one (which is insecure).  As $\alpha$ increases, the protocol, in a way, becomes more quantum (in that Alice, the quantum user, is allowed to send and receive states which are less orthogonal).  However, Bob's capabilities, being classical in nature, are not affected by this $\alpha$ parameter. In fact, as the protocol becomes ``more quantum'' Bob has more trouble determining $A$'s key bit since $B$ is always restricted to the computational $\{\ket{0}, \ket{1}\}$ basis.

Our protocol is purely of theoretical interest.  We are interested in devising a way to measure the effect of quantum state generation and measurement on the security properties of a key-distribution system where one user is forced to be classical and as the other user varies in quantum capabilities.  We perform an information theoretic security analysis of our protocol and look at how $\alpha$ affects the noise tolerance of the protocol (i.e., how does the secure communication rate change as $A$ becomes more or less quantum, even when an all-powerful adversary is attacking).  Naturally, when $\alpha$ is too small, the protocol is ``too classical'' to be secure - as $\alpha$ increases the protocol can attain security for some noise levels; however once $\alpha$ increases too much, then Alice is ``too quantum'' for Bob to understand completely (i.e., he is unable to correctly guess what key-bit $A$ is trying to send to him).

We make several contributions in this work.  We introduce a novel SQKD protocol which is interesting theoretically as it is the first such protocol, that we are aware of, to allow researchers to gauge the effect of quantum state preparation and measurement on a key-distribution protocol where one user remains classical in nature.  This protocol is also highly restrictive in nature as $A$ and $B$ both have severe restrictions placed on them, yet we are still able to prove security.  Second, we perform an information theoretic security analysis of this protocol and our proof technique (which extends that of \cite{QKD-Tom-Krawec-Arbitrary} but to the highly restricted case where fewer noise statistics may be observed) may be of independent interest and applicable to other (S)QKD protocols where users are severely limited in their ability to measure the noise in the quantum channel (note that SQKD protocols require two-way quantum channels allowing Eve two opportunities to attack each qubit - this, in addition to the fact that $A$ and $B$ cannot observe all noise statistics due to their restrictions, greatly increases the complexity of the security analysis).  Finally, we evaluate our protocol, examining the effect of the $\alpha$ parameter for various channels and noise scenarios, discovering interesting properties along the way.

\subsection{Notation and (S)QKD Security}

We denote by $Z$ to be the \emph{computational basis} consisting of states $\{\ket{0}, \ket{1}\}$.  We use $H(p_1,\cdots, p_n)$ to be the Shannon entropy of $\{p_i\}$ and $H(x)$ to mean the binary Shannon entropy, namely $H(x) = H(x,1-x)$.  Note that all logarithms in this paper are base two.

Given a \emph{density operator} $\rho$ (that is, a Hermitian positive semi-definite operator of unit trace), we write $S(\rho)$ to be the von Neumann entropy of $\rho$ defined as $S(\rho) = -tr(\rho\log\rho)$.  If $\rho$ acts on Hilbert space $\mathcal{H}_A\otimes\mathcal{H}_B$, we often write $\rho_{AB}$.  In this case, we define $\rho_B$ to be the partial trace over the $A$ system, namely $\rho_B = tr_A\rho_{AB}$.  This notation extends to three or more systems.  To simplify notation, given $\ket{v}$ in some Hilbert space, we will write $\kb{v}$ to mean $\ket{v}\bra{v}$.

If $\rho_{AB}$ acts on $\mathcal{H}_A\otimes\mathcal{H}_B$, then we write $S(AB)_\rho$ to mean $S(\rho_{AB})$.  We also write $S(A|B)_\rho$ to mean the conditional von Neumann entropy defined to be $S(A|B)_\rho = S(\rho_{AB}) - S(\rho_B)$.  We will forgo writing the subscript ``$\rho$'' if the context is clear.

Any (S)QKD protocol requires both a quantum channel and an authenticated classical channel and these protocols operate in two stages.  The first, called the \emph{quantum communication stage}, utilizes the quantum channel and authenticated classical channel, over numerous \emph{iterations}, to agree on a so-called \emph{raw-key} of size $n$-bits.  Eve, who was attacking the quantum channel, and listening to the authenticated classical communication, also has an ancilla partially entangled with $A$ and $B$'s raw key.  At this point, the system ($A$ and $B$'s raw key along with $E$'s ancilla) may be represented by a classical-quantum state:
\begin{equation}\label{eq:cq-state}
\rho_{ABE} = \sum_{a,b \in \{0,1\}^n}P(a,b)\kb{a}_A\otimes\kb{b}_B\otimes\rho_{E}^{(a,b)}.
\end{equation}
From this, $A$ and $B$ run an error correction protocol (leaking additional information to $E$) and a privacy amplification protocol, shrinking the $n$-bit raw key to a secret key of size $\ell(n)$ on which $E$ has negligible information (in an information theoretic sense).  In the asymptotic scenario as $n \rightarrow \infty$, which we consider here, $E$'s information, and also all failure probabilities, go to zero.  An important statistic in any security proof is the \emph{key-rate:} $r = \ell(n)/n$.  For more information on these general concepts and definitions, the reader is referred to \cite{QKD-survey}.

As with almost all (S)QKD security proofs, we consider \emph{collective attacks}, whereby $E$ attacks the channel in an i.i.d. manner but is free to postpone her measurement of her ancilla to any future point in time and, indeed, may later perform an optimal \emph{coherent} measurement of her entire ancilla.  Usually, proving security against collective attacks is sufficient to prove security against general, arbitrary, attacks \cite{QKD-renner-keyrate,QKD-general-attack,QKD-general-attack2}.  We suspect this result also holds true for our protocol; however due to the highly restrictive nature of $A$ and $B$'s operation, a complete proof of this is outside the scope of this paper and would make for interesting future work.

Under a collective attack (in which case $\rho_{ABE}$ from Equation \ref{eq:cq-state} may actually be written $\rho_{ABE} = \sigma_{ABE}^{\otimes n}$ for some classical-quantum state $\sigma$), we may employ the Devetak-Winter key-rate equation \cite{QKD-winter-keyrate} which states:
\[
r = \lim_{n\rightarrow \infty}\frac{\ell(n)}{n} = \inf[S(A|E)_\sigma - H(A|B)],
\]
where the infimum is over all collective attacks which induce the observed statistics (e.g., the observed error rate, though one may also look at other statistics such as mismatched events \cite{QKD-Tom-First,QKD-Tom-KeyRateIncrease,QKD-Tom-Krawec-Arbitrary}).  It is this computation of $r$ (in particular, the computation of $S(A|E)$ since the computation of $H(A|B)$ is generally trivial) that is the key element in any (S)QKD security proof and our main focus in this work.  From this, one may look at a protocol's \emph{noise tolerance} - that is for what noise levels does $r$ remain positive.

In our security proof, we will make use of the following result proven in prior work (though slightly generalized here):
\begin{theorem}\label{thm:entropy}
(From \cite{QKD-Tom-Krawec-Arbitrary}): Given the classical-quantum state:
\[
\rho_{AE} = \frac{1}{N}\kb{0}_A\otimes\left(\sum_{i=1}^M\kb{E_i}\right) + \frac{1}{N}\kb{1}_A\otimes\left(\sum_{i=1}^M\kb{F_i}\right),
\]
then:
\begin{align*}
S(A|E)_\rho &\ge \sum_{i\in J}\frac{\bk{E_i}+\bk{F_i}}{N}\\
&\times \left(H\left[\frac{\bk{E_i}}{\bk{E_i}+\bk{F_i}}\right] - H\left[\lambda(\ket{E_i}, \ket{F_i})\right]\right),
\end{align*}
where:
\begin{equation}\label{eq:lambda}
\lambda(\ket{x}, \ket{y}) = \frac{1}{2}\left(1+\frac{\sqrt{(\bk{x}-\bk{y})^2+4Re^2\braket{x|y}}}{\bk{x}+\bk{y}}\right),
\end{equation}
and $J$ is any subset $J \subset \{1, \cdots, M\}$.
\end{theorem}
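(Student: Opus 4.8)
The plan is to hand Eve extra classical side information indexing the pairs $(\ket{E_i},\ket{F_i})$; this reduces the computation to a collection of single-pair states whose entropy can be evaluated directly, and the freedom in how much of this information to reveal is exactly what will produce the arbitrary subset $J$. First I would introduce a classical flag register $C$ that resolves the pairs indexed by $J$ while lumping everything outside $J$ into one common symbol. Concretely, consider
\[
\tilde\rho_{AEC} = \frac{1}{N}\sum_{i\in J}\left(\kb{0}_A\otimes\kb{E_i}+\kb{1}_A\otimes\kb{F_i}\right)\otimes\kb{i}_C + \frac{1}{N}\sum_{i\notin J}\left(\kb{0}_A\otimes\kb{E_i}+\kb{1}_A\otimes\kb{F_i}\right)\otimes\kb{*}_C,
\]
where the states $\ket{i}_C$ (for $i\in J$) and $\ket{*}_C$ are mutually orthogonal. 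Tracing out $C$ returns $\rho_{AE}$, so $S(A|E)_\rho = S(A|E)_{\tilde\rho}$, and since conditioning on an additional system can only lower the conditional entropy, $S(A|E)_{\tilde\rho}\ge S(A|EC)_{\tilde\rho}$. Because $C$ is classical, this last quantity is a convex combination of the conditional entropies obtained by fixing the flag value.

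Writing $q_i = (\bk{E_i}+\bk{F_i})/N$ and letting $q_*$ be the complementary weight, this yields
\[
S(A|E)_\rho \ge \sum_{i\in J} q_i\, S(A|E)_{\rho^{(i)}} + q_*\, S(A|E)_{\rho^{(*)}},
\]
where $\rho^{(i)}$ is the normalized state conditioned on flag $i$ and $\rho^{(*)}$ is the leftover. Here I would invoke the fact that for any classical-quantum state whose $A$-register is classical the conditional von Neumann entropy is nonnegative (equivalently, the Holevo quantity of an ensemble never exceeds the Shannon entropy of its weights). Thus the leftover term may be discarded; crucially, it is this lumping-and-discarding of the complement of $J$ that makes the bound valid for an \emph{arbitrary} subset $J$, since no individual pair is required to contribute positively.

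It then remains to analyze a single pair. For $i\in J$ the conditioned state is $\rho^{(i)}_{AE} \propto \kb{0}_A\otimes\kb{E_i} + \kb{1}_A\otimes\kb{F_i}$, and $S(A|E)_{\rho^{(i)}} = S(AE)-S(E)$. The global state has orthogonal $A$-flags, so $S(AE) = H[\bk{E_i}/(\bk{E_i}+\bk{F_i})]$, while the reduced state $\rho^{(i)}_E \propto \kb{E_i}+\kb{F_i}$ has rank at most two and shares its nonzero spectrum with the $2\times2$ Gram matrix of $\{\ket{E_i},\ket{F_i}\}$. Diagonalizing this block gives eigenvalues whose larger value equals $\lambda(\ket{E_i},\ket{F_i})$ but with $Re^2\bkm{E_i}{F_i}$ replaced by the full $|\bkm{E_i}{F_i}|^2$. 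Since $|\bkm{E_i}{F_i}|^2\ge Re^2\bkm{E_i}{F_i}$, this true eigenvalue is at least $\lambda(\ket{E_i},\ket{F_i})\ge 1/2$, and because $H$ is decreasing on $[1/2,1]$ we get $S(E)\le H[\lambda(\ket{E_i},\ket{F_i})]$. Substituting and summing against the weights $q_i$ over $i\in J$ produces the stated inequality.

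The main obstacle is conceptual rather than computational: setting up the flag construction correctly and keeping the direction of the conditioning inequality straight, and then recognizing that collapsing the complement of $J$ into a single symbol, together with nonnegativity of $S(A|E)$ for classical-quantum states, is precisely what licenses an arbitrary $J$ rather than forcing the full sum. Once the reduction to one pair is in place, the spectral computation and the $Re$-versus-modulus relaxation are routine linear algebra.
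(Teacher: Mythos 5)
Your proof is correct, but it takes a genuinely different route from the paper's. The paper's own proof is essentially a citation: the case $J=\{1,\dots,M\}$ is quoted from \cite{QKD-Tom-Krawec-Arbitrary}, and the extension to arbitrary $J$ is justified in one line by observing that each summand $H\left[\bk{E_i}/(\bk{E_i}+\bk{F_i})\right]-H\left[\lambda(\ket{E_i},\ket{F_i})\right]$ arises in that proof as the conditional entropy of a classical-quantum state, hence is nonnegative, so terms outside $J$ may simply be dropped from the full sum. You instead give a self-contained argument that never invokes the cited result: the classical flag register $C$, the chain $S(A|E)_\rho = S(A|E)_{\tilde\rho} \ge S(A|EC)_{\tilde\rho}$ by strong subadditivity, the convex decomposition over flag values, and the Gram-matrix eigenvalue computation with the relaxation $Re^2\braket{E_i|F_i} \le |\braket{E_i|F_i}|^2$ --- in effect a reconstruction of the proof inside the cited reference. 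Your handling of the subset $J$ also differs in mechanism and is arguably more robust: rather than dropping terms one by one (which requires each dropped term to be nonnegative --- true, since Cauchy--Schwarz gives $\lambda_i \ge \max(x_i,1-x_i)$ with $x_i = \bk{E_i}/(\bk{E_i}+\bk{F_i})$, so $H[\lambda_i]\le H[x_i]$, though the paper leaves this implicit), you lump the entire complement of $J$ into a single flag symbol and discard its contribution in one stroke via nonnegativity of the conditional entropy of a cq state, never needing term-by-term positivity. What the paper's approach buys is brevity; what yours buys is self-containedness, independence from the external reference, and a transparent justification of the arbitrary-$J$ claim.
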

\begin{proof}
The proof for $J = \{1,\cdots, M\}$ can be found in \cite{QKD-Tom-Krawec-Arbitrary}.  The result also follows for arbitrary subset $J$ by noting that, in the proof, the term:
\[
H\left[\frac{\bk{E_i}}{\bk{E_i}+\bk{F_i}}\right] - H\left[\lambda(\ket{E_i}, \ket{F_i})\right]
\]
is the result of computing the conditional entropy of a classical-quantum state which is known to be always non-negative.
\end{proof}

\subsection{Semi-Quantum Cryptography and Related Work}
Since the framework's introduction in 2007 by Boyer et al., \cite{SQKD-first-PRL,SQKD-second}, numerous SQKD protocols have been proposed \cite{SQKD-first-PRL,SQKD-second,SQKD-lessthan4,SQKD-Single-Security,SQKD-3,SQKD-cl-A,SQKD-no-measure,SQKD-mirror} (just to list a few), some with information theoretic proofs of security \cite{SQKD-Krawec-SecurityProof,SQKD-zhang2016single,QKD-Tom-Krawec-Arbitrary}.  Often one is interested in removing requirements on one or both users while still attempting to attain security against an all-powerful adversary - this is to study the effects of these resources and abilities on the secure communication rate of the resulting protocol.  However, no prior SQKD protocols allow for the smooth transition from a purely classical protocol to a semi-quantum one.


An SQKD protocol requires a two-way quantum channel, allowing a qubit to travel from $A$ to $B$ (the \emph{forward direction}) and return from $B$ to $A$ (the \emph{reverse direction}).  $A$, the fully quantum user, is allowed to prepare any arbitrary quantum state and send it to the ``classical'' user $B$, who is allowed only to directly work with the $Z$ basis.  In more detail, on receiving a qubit, $B$ may choose to do one of two operations:
\begin{enumerate}
\item $\MR$: If he chooses this option, he performs a $Z$ basis measurement on the qubit, resulting in outcome $\ket{r}$, for $r \in \{0,1\}$.  He then resends the same state $\ket{r}$ to $A$.  Note that he can only measure and prepare qubits in this single basis.
\item $\R$: In this case, $B$ disconnects from the quantum channel and reflects all qubits back to $A$.  If this is chosen, $A$ is, essentially, communicating with herself.
\end{enumerate}
When a qubit returns to $A$, she is allowed to perform any quantum operation on it.  Note that, under this scenario, Eve is allowed two opportunities to attack every qubit.

\section{The Protocol}

Our protocol, being a semi-quantum one, requires a two-way quantum channel and forces $B$ to be ``classical'' in nature as described in the previous section.  We also place additional restrictions on the quantum user $A$.  On each iteration of the quantum communication stage, $A$ is allowed to send only one of two possible states: either $\ket{0}$ or $\ket{a} = \alpha\ket{0} + \beta\ket{1}$, where $\alpha\ge 0$ is a public, user-specified, parameter and $\beta = \sqrt{1-\alpha^2}$.

Bob is the classical user - as such, on receipt of a qubit from $A$, he may only directly interact with it through the $Z$ basis (by choosing $\MR$), or he may simply ignore the qubit and reflect it back to $A$ (by choosing $\R$).

When a qubit returns to $A$, she will perform a measurement using the three-outcome POVM $\Lambda = \{\Lambda_0, \Lambda_a, \Lambda_?\}$ defined:
$\Lambda_0 = p\ket{0}\bra{0},$ $\Lambda_a = p\ket{a}\bra{a},$
and where $\Lambda_? = I-\Lambda_0-\Lambda_1$.  The parameter $p$, which is another public constant, must be chosen to ensure $\Lambda_? \ge 0$.  Furthermore, $A$ wishes to maximize $p$ so that the probability she receives the indeterminate outcome ``?'' is minimized.  Some algebra reveals that the maximal $p$ that satisfies this is $p \le \frac{1}{1+\alpha}$.  Note that, in this work, where we only consider the asymptotic scenario, the actual choice of $p$ is not that important so long as $0 < p \le \frac{1}{1+\alpha}$.  In a finite key analysis, this choice of $p$ would be much more important, but we leave this as future work.

Notice that, when $\alpha = 0$, the protocol ``collapses'' to a purely classical communication system where $A$ sends $\ket{0}$ and $\ket{1}$ only and where she is always measuring in the $Z$ basis (since $p$ approaches $1$ as $\alpha$ decreases and so $\Lambda_0 = \ket{0}\bra{0}$, $\Lambda_a = \ket{1}\bra{1},$ and $\Lambda_? \equiv 0$).  Of course, $B$ is classical regardless of the choice of $\alpha$ since he is only able to measure and send in the $Z$ basis (or disconnect from the quantum channel, thus causing $A$ to simply ``talk to herself'').  For $\alpha > 0$, the protocol is inherently quantum - but the question is, how far from classical ($\alpha = 0$) must the communication be before we start attaining secure communication? Our protocol in detail is described in Protocol \ref{prot}.

\begin{algorithm}
\caption{$\alpha$-SQKD}\label{prot}
\textbf{Public, User-Defined, Parameters:} $\alpha$, the level of ``quantumness'' of the protocol; $p \in (0,1/(1+\alpha)]$, the POVM parameter as discussed in the text; $q \in (0,1)$, the probability that $B$ chooses $\MR$ on any particular iteration (in the asymptotic scenario, which we consider in this work, this value may be set arbitrarily close to $1$ as is done for other (S)QKD protocols to improve efficiency \cite{QKD-BB84-Modification,SQKD-Krawec-SecurityProof}).

$ $\newline
\textbf{Quantum Communication Stage:} The quantum communication stage repeats the following process using the two-way quantum channel and the authenticated classical channel:

$ $\newline
1. $A$ chooses a bit $k_A$ uniformly at random.  If $k_A = 0$, she sends $\ket{0}$ to $B$; otherwise she sends $\ket{a} = \alpha\ket{0} + \beta\ket{1}$, where $\beta = \sqrt{1-\alpha^2}$.
$ $\newline
2. $B$ chooses randomly to $\MR$ (with probability $q$) or to $\R$ (with probability $1-q$).  If he chooses $\MR$, he will save his measurement result as $k_B \in \{0,1\}$.
$ $\newline
3. Finally, with probability $q$, $A$ will simply discard the qubit; otherwise, she will measure using POVM $\Lambda$, as discussed in the text, saving the outcome (which is one of ``$0$,'' ``$a$,'' or ``$?$'').
$ $\newline
4. Using the authenticated classical channel, $B$ will disclose his choice of operation (either $\MR$ or $\R$) and $A$ will disclose whether she chose to measure or not.  For all iterations where $A$ chose to measure the returning qubit, $A$ will send to $B$ her preparation and measurement outcomes (\emph{these iterations will be used only to test the quantum channel and not for key distillation}).  For all other iterations (where $A$ did not measure) and if $B$ chose $\MR$, then $A$ and $B$ will use their respective $k_A$ and $k_B$ values to contribute towards their raw key.

$ $\newline
\textbf{Classical Reconciliation Stage:}
Following the quantum communication stage, assuming the channel noise is low enough (to be discussed), $A$ and $B$ will run error correction and privacy amplification, resulting in a secret key.
\end{algorithm}


The reader will observe that, for $\alpha > 0$, our protocol always has some noise in the raw key, \emph{even when no adversary is present}!  Indeed, unless the protocol is purely classical ($\alpha = 0$), the classical user $B$ will be unable to determine exactly the information that $A$ is trying to send.  The issue is exacerbated when an adversary comes into play (adding additional noise).  As mentioned in the introduction, \emph{the protocol is purely a theoretical one} studied for its theoretical interest to help study the ``gap'' between classical and quantum communication.  We do not expect this protocol to ever be implemented in practice (\emph{unless some faulty hardware forces this protocol to be used}).  Note that we are also not concerned with practical attacks such as photon loss or multi-photon states \cite{QKD-survey,SQKD-photon-tag,SQKD-photon-tag-comment} - though interesting, these issues are outside the scope of this theoretical analysis.

We are interested in two questions: Given an observed noise level $Q$, for what $\alpha$ is the protocol secure?  Of course when $\alpha = 0$, the protocol will never be secure.  Secondly, what is an optimal choice of $\alpha$?  That is, how ``far'' from the classical case of $\alpha = 0$ must the communication be to optimize the secure transfer of information between $A$ and $B$ when faced with a quantum adversary $E$.


\section{Security Analysis}
Our goal in this section is to compute our protocol's key-rate (specifically $S(A|E)$) as a function of $\alpha$ and those observable parameters that $A$ and $B$ may measure in the channel (which are very few).  We begin by deriving a density operator description of a single ``successful'' iteration of the protocol (where by ``successful'' we mean an iteration leading to the distillation of a raw key bit).  For now we assume collective attacks whereby Eve attacks each iteration in an i.i.d. manner.  In this case, as shown in \cite{SQKD-entangle}, for SQKD protocols, it suffices to only prove security against so-called \emph{restricted} collective attacks.  These restricted attacks consist of an isometry $\mathcal{F}: \mathcal{H}_T \rightarrow \mathcal{H}_T\otimes\mathcal{H}_E$ applied in the forward channel (connecting $A$ to $B$) and a unitary operator $U_R$ applied in the reverse channel and acting on $\mathcal{H}_T\otimes\mathcal{H}_E$.  Here we use $\mathcal{H}_T$ to denote the two-dimensional space modeling the qubit in transit and $\mathcal{H}_E$ is Eve's ancilla.  The action of $\mathcal{F}$ is simply:
\begin{align}
\mathcal{F}\ket{0}_T &= q_0\ket{0,0}_{TE} + q_1\ket{1,e}_{TE}\label{eq:forward-attack}\\
\mathcal{F}\ket{1}_T &= q_2\ket{0,f}_{TE} + q_3\ket{1,0}_{TE},\notag
\end{align}
where $q_i \in \mathbb{R}_{\ge 0}$ subject to $q_0^2 + q_1^2 = q_2^2 + q_3^2 = 1$ and where $\ket{e}$ and $\ket{f}$ are arbitrary, normalized, vectors in $\mathcal{H}_E$.  There are some additional restrictions that may be made on this attack (in particular $\ket{e}$ and $\ket{f}$ may exist with a two-dimensional subspace of $\mathcal{H}_E$ spanned by $\ket{0}_E$ and a second basis vector); however, this notation is sufficient for the discussion at hand.  For further information on the restricted attack, and the proof that security against such attacks implies security against arbitrary collective attacks, the reader is referred to \cite{SQKD-entangle}.  Note that, by linearity of $\mathcal{F}$, we also have the following:
\begin{align}
\mathcal{F}\ket{a} &= \ket{0}_T\otimes(q_0\alpha\ket{0}_E + q_2\beta\ket{f}_E)\label{eq:forward-attack-a}\\
&+ \ket{1}_T\otimes(q_1\alpha\ket{e}_E + q_3\beta\ket{0}_E).\notag
\end{align}

To build the desired density operator, we trace the evolution of an iteration of the protocol.  Following $A$'s preparation (randomly sending $\ket{0}$ or $\ket{a}$), and Eve's first attack $\mathcal{F}$, and after $B$ measures in the $Z$ basis (recall, we are currently only interested in a key-distillation iteration and so we condition on the event that $B$ chooses $\MR$), the joint state is found to be:
\begin{align*}
&\frac{1}{2}\kb{0}_A\otimes(\kb{0}_B\otimes q_0^2\kb{0,0}_{TE} + \kb{1}_B\otimes q_1^2 \kb{1,e}_{TE})\\
+&\frac{1}{2}\kb{1}_A\otimes(\kb{0}_B\otimes P(q_0\alpha\ket{0,0}_{TE} + q_2\beta\ket{0,f}_{TE})\\
&+ \kb{1}_B\otimes P(q_1\alpha\ket{1,e}_{TE} + q_3\beta\ket{1,0}_{TE})),
\end{align*}
where $P(\ket{z}) = \kb{z} = \ket{z}\bra{z}$.  Following this, the qubit returns to $A$; however, before arriving, Eve has a second opportunity to attack using operator $U_R$.  We write the action of $U_R$ abstractly as:
\begin{align}
U_R\ket{0,0}_{TE} = \ket{0,e_0} + \ket{1,e_1} && U_R\ket{1,0} &= \ket{0,e_2} + \ket{1,e_3}\label{eq:UR-action}\\
U_R\ket{1,e}_{TE} = \ket{0,f_0} + \ket{1,f_1} && U_R\ket{0,f} &= \ket{0,f_2} + \ket{1,f_3}.\notag
\end{align}
Above, the states $\ket{e_i}$ and $\ket{f_i}$ are arbitrary states in $\mathcal{H}_E$ (though, unitarity of $U_R$ imposes some restrictions on them which will be important momentarily).

Following the application of this attack, the qubit returns to $A$ who simply discards it (recall, we are conditioning on an iteration that leads to a raw-key bit).  Thus, we may simply trace out the Transit space following the application of $U_R$.  The final density operator, therefore, is found to be:
\begin{align}
\rho_{ABE} &= \frac{1}{2}\kb{0}_A\otimes(\kb{0}_B\otimes q_0^2(\kb{e_0} + \kb{e_1})\label{eq:density-full}\\
&+ \kb{1}_B\otimes q_1^2(\kb{f_0} + \kb{f_1}))\notag\\
&+\frac{1}{2}\kb{1}_A\otimes(\kb{0}_B\otimes[ P(q_0\alpha\ket{e_0} + q_2\beta\ket{f_2})\notag\\
&+ P(q_0\alpha\ket{e_1} + q_2\beta\ket{f_3})])\notag\\
&+\frac{1}{2}\kb{1}_A\otimes(\kb{1}_B\otimes[ P(q_1\alpha\ket{f_0} + q_3\beta\ket{e_2})\notag\\
&+ P(q_1\alpha\ket{f_1} + q_3\beta\ket{e_3})]).\notag
\end{align}
To clean up the notation, we define the following vectors:
\begin{align*}
\ket{g_0} = q_1\alpha\ket{f_1} + q_3\beta\ket{e_3} && \ket{g_1} = q_1\alpha\ket{f_0} + q_3\beta\ket{e_2}\\
\ket{g_2} = q_0\alpha\ket{e_1} + q_2\beta\ket{f_3} && \ket{g_3} = q_0\alpha\ket{e_0} + q_2\beta\ket{f_2}
\end{align*}
From this, we may then use Theorem \ref{thm:entropy} to derive the following lower-bound:
\begin{align}
 &S(A|E)_\rho \ge \frac{q_0^2\bk{e_0} + \bk{g_0}}{2}\label{eq:SAE-full}\\
&\times\left(H\left[\frac{q_0^2\bk{e_0}}{q_0^2\bk{e_0} + \bk{g_0}}\right] - H\left[\lambda(q_0\ket{e_0},\ket{g_0})\right]\right)\notag\\\notag\\
&+\frac{q_0^2\bk{e_1} + \bk{g_1}}{2}\notag\\
&\times\left(H\left[\frac{q_0^2\bk{e_1}}{q_0^2\bk{e_1} + \bk{g_1}}\right] - H\left[\lambda(q_0\ket{e_1},\ket{g_1})\right]\right)\notag\\\notag\\
&+\frac{q_1^2\bk{f_0} + \bk{g_2}}{2}\notag\\
&\times\left(H\left[\frac{q_1^2\bk{f_0}}{q_1^2\bk{f_0} + \bk{g_2}}\right] - H\left[\lambda(q_1\ket{f_0},\ket{g_2})\right]\right)\notag\\\notag\\
&+\frac{q_1^2\bk{f_1} + \bk{g_3}}{2}\notag\\
&\times\left(H\left[\frac{q_1^2\bk{f_1}}{q_1^2\bk{f_1} + \bk{g_3}}\right] - H\left[\lambda(q_1\ket{f_1},\ket{g_3})\right]\right).\notag
\end{align}
Though, by setting $J= \{0\}$ from the theorem, we also have the following (weaker) lower-bound:
\begin{align}
&S(A|E)_\rho\ge\frac{q_0^2\bk{e_0} + \bk{g_0}}{2}\label{eq:SAE}\\
&\times\left(H\left[\frac{q_0^2\bk{e_0}}{q_0^2\bk{e_0} + \bk{g_0}}\right] - H\left[\lambda(q_0\ket{e_0},\ket{g_0})\right]\right).\notag
\end{align}

It is this lower-bound we will actually consider.
To compute $S(A|E)$ (giving us the key-rate), we need to compute, or bound, the inner-products appearing in the above expression, based only on statistics we may observe.

Note that $q_0$ and $q_1$ are both observable parameters.  Indeed, let $\prf_{0,i}$ be the probability that $B$ measures $\ket{i}$ (for $i \in \{0,1\}$) if $A$ initially sent $\ket{0}$.  This is one of the few statistics $A$ and $B$ actually can estimate and is, in fact, the only observable noise statistic in the forward channel (they cannot measure, for example, $\prf_{1,i}$ when $\alpha > 0$).  It is not difficult to see, from Equation \ref{eq:forward-attack}, that $q_0^2 = \prf_{0,0}$ and $q_1^2 = \prf_{0,1}$.  Note that, by definition of the restricted attack, it is sufficient to consider non-negative $q_i$ \cite{SQKD-entangle}.

As mentioned, the users cannot directly observe $q_2$ and $q_3$.  However, they can estimate it by considering $\prf_{a,1}$, the probability that $B$ measures $\ket{1}$ if $A$ initially sent $\ket{a}$ (this is something that may be observed).  Note that, from Equation \ref{eq:forward-attack-a}, we have:
\begin{align}
\prf_{a,1} &= ||q_1\alpha\ket{e} + q_3\beta\ket{0}||^2\notag\\
&= q_1^2\alpha^2 + q_3^2\beta^2 + 2q_1q_3\alpha\beta Re\braket{0|e}\notag\\
&=\prf_{0,1}\alpha^2 + q_3^2\beta^2 + 2\sqrt{\prf_{0,1}}q_3\alpha\beta Re\braket{0|e}.
\end{align}
Of course, $|\braket{0|e}| \le 1$.  We are constrained by the fact that $q_3 \ge 0$ (since, for the restricted attack, each $q_i$ are non-negative real numbers \cite{SQKD-entangle}).  We therefore have the following solution for $q_3$, looking for the smallest positive root of the above quadratic equation, assuming $\prf_{a,1} \ge \alpha^2\prf_{0,1}$ (which it will be in our evaluations):
\begin{equation}\label{eq:q3}
1\ge q_3 \ge \frac{1}{\beta}\left(\sqrt{\prf_{a,1}} - \alpha\sqrt{\prf_{0,1}}\right).
\end{equation}

We therefore have values, or bounds, for all $q_i$ (note that $q_2 = \sqrt{1-q_3^2}$).  It is clear that we may observe $\bk{e_0}, \bk{e_1}, \bk{f_0}$, and $\bk{f_1}$.  Indeed, let $\prr_{i,j,k}$ denote the probability that $A$'s measurement observes ``$k$'' conditioned on the event $A$ initially sent $\ket{i}$ and $B$ chose $\MR$ and actually observed $\ket{j}$.  Of course, $i \in \{0,a\}$, $j\in\{0,1\}$ and $k\in\{0,a,?\}$.  It is not difficult to see, then, that $\prr_{0,0,0} = p\cdot\bk{e_0}$ where $p$ is the POVM parameter as described in Protocol \ref{prot}; as discussed, we assume $p > 0$.  By unitarity we also have $\bk{e_1} = 1-\bk{e_0}$.  Similarly, we have $\prr_{0,1,0} = p\cdot \bk{f_0}$ and $\bk{f_1} = 1-\bk{f_0}$.  To simplify notation, at this point we will assume a symmetric attack and define the following:
\begin{align*}
&\prr_{0,0,0} = p\cdot(1-Q_R) &&\Rightarrow \bk{e_0} = 1-Q_R\\
& &&\Rightarrow \bk{e_1} = Q_R\\
&\prr_{0,1,0} = p\cdot Q_R &&\Rightarrow \bk{f_0} = Q_R\\
& && \Rightarrow \bk{f_1} = 1-Q_R.
\end{align*}
(Note we use $Q_R$ to denote the noise in the Reverse channel, from $B$ to $A$.)

This assumption that the \emph{observable} noise is symmetric in this manner (which may be enforced by $A$ and $B$ and is a common assumption in (S)QKD security proofs) is not necessary, and our analysis below follows without it; we only use this to simplify notation.  Note that, if there is no noise in the forward channel (in which case $\prr_{0,1,0}$ is technically undefined since we are conditioning on an event which never occurs), then $\bk{f_0}$ and $\bk{f_1}$ never show up in any of our computations and so we may define $\prr_{0,1,0}$ arbitrarily; thus we assume $\prr_{0,1,0} = p\cdot Q_R$ in this case regardless.

We also claim $\bk{g_i}$ may be observed.  Consider the case that $A$ sends $\ket{a}$, $B$ chooses $\MR$ and observes $\ket{0}$.  From Equation \ref{eq:forward-attack-a}, we see the state collapses to:
\[
\frac{\ket{0}(q_0\alpha\ket{0}_E + q_2\beta\ket{f}_E)}{\sqrt{\prf_{a,0}}}.
\]
After Eve attacks the returning qubit, the state is found to be (before $A$ measures):
\[
\frac{\ket{0,g_3} + \ket{1,g_2}}{\sqrt{\prf_{a,0}}}.
\]
Then, it follows that when $A$ measures we have: $\prr_{a,0,0} = p\cdot\bk{g_3}/\prf_{a,0}$. Furthermore, due to unitarity of Eve's attack, it holds that $\bk{g_2} + \bk{g_3} = \prf_{a,0}$.  Repeating the above analysis conditioning on $B$ observing $\ket{1}$, we conclude:
\begin{align}
\bk{g_3} &= \frac{\prf_{a,0}\prr_{a,0,0}}{p} = \prf_{a,0}(1-Q_R)\label{eq:gi}\\
\bk{g_2} &= \prf_{a,0}\left(1-\frac{\prr_{a,0,0}}{p}\right) = \prf_{a,0}Q_R\notag\\
\bk{g_1} &= \frac{\prf_{a,1}\prr_{a,1,0}}{p} = \prf_{a,1}Q_R\notag\\
\bk{g_0} &= \prf_{a,1}\left(1-\frac{\prr_{a,1,0}}{p}\right) = \prf_{a,1}(1-Q_R).\notag
\end{align}
Note that, above, we assumed $\prf_{a,0,0} = p\cdot(1-Q_R)$ and $\prf_{a,1,0} = p\cdot Q_R$.  This symmetry assumption (which also may be enforced by the users) is not necessary but only done to simplify our notation.
Also, as before, if, for instance, $\prf_{a,1} = 0$, then $\ket{g_0}$ and $\ket{g_1}$ technically never appear in $\rho$ and so they may be arbitrary; in this case we may simply define $\prr_{a,1,0} = p\cdot Q_R$.  Similarly for the case if $\prf_{a,0} = 0$.

Finally, to compute our bound on $S(A|E)$, we will also need to compute the inner product appearing in the $\lambda$ function, namely $Re^2\braket{e_0|g_0}$.  As we are interested in the worst-case, we actually want to find a lower-bound on this inner-product (which, as can be seen from Equation \ref{eq:lambda}, minimizes $S(A|E)$).  It is not difficult to see that:
\begin{align}
&Re^2\braket{e_0|g_0} = (q_1\alpha\braket{e_0|f_1} + q_3\beta\braket{e_0|e_3})^2\label{eq:e0g0}\\
&\ge\left[\max\left(0,q_3\beta\braket{e_0|e_3} - q_1\alpha\sqrt{\bk{e_0}\bk{f_1}}\right)\right]^2\notag\\
&\ge\left[\max\left(0, q_3\beta\braket{e_0|e_3} - \alpha\sqrt{\prf_{0,1}}(1-Q_R)\right)\right]^2,\notag
\end{align}
where, above, we used the fact that $|\braket{e_0|f_1}| \le \sqrt{\bk{e_0}\bk{f_1}}$.

We thus reduced the problem to bounding $\braket{e_0|e_3}$.  To attain this, we must look at several more statistics.  First, consider $\bk{g_1} = \prf_{a,1}\prr_{a,1,0}/p$. Expanding $\bk{g_1}$ yields:
\begin{align*}
&\frac{\prf_{a,1}\prr_{a,1,0}}{p} = q_1^2\alpha^2\bk{f_0} + q_3^2\beta^2\bk{e_2}\\
&+ 2q_1q_3\alpha\beta Re\braket{f_0|e_2}.\\\\
\Rightarrow & q_3^2\beta^2\left(\sqrt{\bk{e_2}}\right)^2 + 2q_1q_3\alpha\beta\sqrt{Q_R}\sqrt{\bk{e_2}}\cos\theta\\
&+ q_1^2\alpha^2Q_R - \prf_{a,1}\prr_{a,1,0}/p = 0.
\end{align*}
Above, we used the fact that $Re\braket{f_0|e_2} = \sqrt{\bk{e_2}}\sqrt{\bk{f_0}}\cos\theta = \sqrt{\bk{e_2}}\sqrt{Q_R}\cos\theta$, for some $\theta$ (this follows from the Cauchy-Schwarz inequality).  Solving the above quadratic, and taking the maximal root over all $\theta$ (note that $\bk{e_2}$ represents the probability of a $\ket{1}$ flipping to a $\ket{0}$, however this noise value is not observable and so we can only bound it based on values we can observe), we find:
\begin{align}\label{eq:e2}
\sqrt{\bk{e_2}} &\le \frac{1}{q_3\beta}\left(q_1\alpha\sqrt{Q_R} + \sqrt{\prf_{a,1}\prr_{a,1,0}/p}\right)\\
&=\frac{1}{q_3\beta}\left(q_1\alpha\sqrt{Q_R} + \sqrt{\prf_{a,1}Q_R}\right)\notag
\end{align}
Similarly, we can bound $\bk{f_3}$ by considering $\bk{g_2}=\prf_{a,0}Q_R$. Solving the resulting quadratic, we find:
\begin{equation}\label{eq:f3}
\sqrt{\bk{f_3}} \le \frac{1}{q_2\beta}\left(q_0\alpha\sqrt{Q_R} + \sqrt{\prf_{a,0}Q_R}\right).
\end{equation}
We now have upper-bounds on the ``hidden'' noise of the channel.

Next, let us consider the statistic $\prr_{a,R,a}$ which we use to denote the probability that, conditioning on the event $A$ sends $\ket{a}$, $B$ chooses to $\R$, and $A$ chooses to measure using POVM $\Lambda$ (see Protocol \ref{prot}), that the outcome of this measurement is ``$a$''.

It is straight-forward (though slightly tedious) algebra, to find that:
\begin{equation}
U_R\mathcal{F}\ket{a} = \ket{a}(\ket{V_{a,0,a}} + \ket{V_{a,1,a}}) + \ket{\ba}\ket{E_{\ba}},
\end{equation}
where:
\begin{align}
\ket{V_{a,0,a}} &= q_0\alpha^2\ket{e_0} + q_2\alpha\beta\ket{f_2} + q_0\alpha\beta\ket{e_1} + q_2\beta^2\ket{f_3}\\
\ket{V_{a,1,a}} &= q_1\alpha^2\ket{f_0} + q_3\alpha\beta\ket{e_2} + q_1\alpha\beta\ket{f_1} + q_3\beta^2\ket{e_3},
\end{align}
and where $\ket{E_{\ba}}$ is a sub-normalized vector in $\mathcal{H}_E$, the exact state of which may be found by tracing the action of linear operator $U_R\mathcal{F}$, though its state is irrelevant to our discussion.  From this, we find:
\begin{align}
&\prr_{a,R,a} = p||\ket{V_{a,0,a}} + \ket{V_{a,1,a}}||^2\notag\\
&= p(\bk{V_{a,0,a}} + \bk{V_{a,1,a}} + 2Re\braket{V_{a,0,a}|V_{a,1,a}}).\label{eq:pra-first}
\end{align}

At this point, we must consider additional mismatched measurements.  Consider $\prr_{a,0,a}$ which we use to denote the probability that, conditioning on $A$ sending $\ket{a}$, $B$ choosing $\MR$ and actually observing $\ket{0}$, and $A$ choosing to measure, that she receives outcome ``$a$''.  To compute this probability, we trace the evolution of the qubit as it travels:
\begin{align*}
\ket{a} &\rightarrow \frac{\ket{0}(q_0\alpha\ket{0} + q_2\beta\ket{f})}{\sqrt{\prf_{a,0}}}\\
&\rightarrow \frac{q_0\alpha(\ket{0,e_0}+\ket{1,e_1}) + q_2\beta(\ket{0,f_2} + \ket{1,f_3})}{\sqrt{\prf_{a,0}}}\\
&=\frac{\ket{a}\ket{V_{a,0,a}}}{\sqrt{\prf_{a,0}}} + \ket{\bar{a}}\ket{E'}_E,
\end{align*}
where $\ket{E'}_E$ is some irrelevant, sub-normalized, state in $E$'s ancilla.
Note that, from the above expression, the choice of notation for $\ket{V_{a,0,a}}$ is clear and we find:
\begin{equation}
\prr_{a,0,a} = p\cdot \frac{\bk{V_{a,0,a}}}{\prf_{a,0}} \Rightarrow \bk{V_{a,0,a}} = \frac{\prf_{a,0}\prr_{a,0,a}}{p}.
\end{equation}
Repeating the above but considering the event when $B$ observes $\ket{1}$, we find:
\begin{equation}
\bk{V_{a,1,a}} = \frac{\prf_{a,1}\prr_{a,1,a}}{p}.
\end{equation}
Substituting this into Equation \ref{eq:pra-first} and also expanding $Re\braket{V_{a,0,a}|V_{a,1,a}}$, we find:
\begin{align}
\prr_{a,R,a} &= \prf_{a,0}\prr_{a,0,a} + \prf_{a,1}\prr_{a,1,a}\label{eq:QA-noise-full}\\
&+2p\cdot Re(q_0q_1\alpha^4\braket{e_0|f_0} + q_0q_3\alpha^3\beta\braket{e_0|e_2})\notag\\
&+2p\cdot Re( q_0q_1\alpha^3\beta\braket{e_0|f_1} + q_0q_3\alpha^2\beta^2\braket{e_0|e_3})\notag\\
&+2p\cdot Re(q_1q_2\alpha^3\beta\braket{f_0|f_2} + q_2q_3\alpha^2\beta^2\braket{e_2|f_2})\notag\\
&+2p\cdot Re( q_1q_2\alpha^2\beta^2\braket{f_1|f_2} + q_2q_3\alpha\beta^3\braket{e_3|f_2})\notag\\
&+2p\cdot Re(q_0q_1\alpha^3\beta\braket{e_1|f_0} + q_0q_3\alpha^2\beta^2\braket{e_1|e_2})\notag\\
&+ 2p\cdot Re(q_0q_1\alpha^2\beta^2\braket{e_1|f_1} + q_0q_3\alpha\beta^3\braket{e_1|e_3})\notag\\
&+2p\cdot Re(q_1q_2\alpha^2\beta^2\braket{f_0|f_3} + q_2q_3\alpha\beta^3\braket{e_2|f_3})\notag\\
&+2p\cdot Re( q_1q_2\alpha\beta^3\braket{f_1|f_3} + q_2q_3\beta^4\braket{e_3|f_3}).\notag
\end{align}
(Note that, above, we used the fact that $Re\braket{x|y} = Re\braket{y|x}$.)  We may simplify the above equation slightly by taking advantage of the unitarity of $U_R$.  Namely, we have the following restrictions (see Equation \ref{eq:UR-action}):
\begin{align*}
\braket{e_0|e_2} + \braket{e_1|e_3} = 0 && &\braket{f_0|f_2} + \braket{f_1|f_3} = 0\\
\braket{e_0|f_0} + \braket{e_1|f_1} = 0 && &\braket{e_2|f_2} + \braket{e_3|f_3} = 0.
\end{align*}
Using this, Equation \ref{eq:QA-noise-full} becomes:
\begin{align}
\prr_{a,R,a} &= \prf_{a,0}\prr_{a,0,a} + \prf_{a,1}\prr_{a,1,a}\\
&+2p\cdot Re(q_0q_1[\alpha^4 - \alpha^2\beta^2]\braket{e_0|f_0})\notag\\
& +2p\cdot Re( q_0q_3[\alpha^3\beta - \alpha\beta^3]\braket{e_0|e_2})\notag\\
&+2p\cdot Re(q_1q_2[\alpha^3\beta-\alpha\beta^3]\braket{f_0|f_2})\notag\\
&+2p\cdot Re( q_2q_3[\alpha^2\beta^2 - \beta^4]\braket{e_2|f_2})\notag\\
&+2p\cdot Re(q_0q_3\alpha^2\beta^2[\braket{e_0|e_3} + \braket{e_1|e_2}])\notag\\
&+2p\cdot Re( q_0q_1\alpha^3\beta[\braket{e_0|f_1} + \braket{e_1|f_0}])\notag\\
&+2p\cdot Re(q_1q_2\alpha^2\beta^2[\braket{f_0|f_3} + \braket{f_1|f_2}])\notag\\
& +2p\cdot Re( q_2q_3\alpha\beta^3[\braket{e_2|f_3} + \braket{e_3|f_2}]).\notag
\end{align}
Consider the following inner-product:
\begin{align*}
&Re\braket{g_1|g_3} = q_0q_1\alpha^2Re\braket{e_0|f_0} + q_1q_2\alpha\beta Re\braket{f_0|f_2}\\
&+ q_0q_3\alpha\beta Re\braket{e_0|e_2} + q_2q_3\beta^2Re\braket{e_2|f_2}.
\end{align*}
Then the above equation for $\prr_{a,R,a}$ simplifies to:
\begin{align}
\prr_{a,R,a} &= \prf_{a,0}\prr_{a,0,a} + \prf_{a,1}\prr_{a,1,a}\\
&+ 2p(\alpha^2-\beta^2)Re\braket{g_1|g_3}\notag\\
&+2p\cdot Re(q_0q_3\alpha^2\beta^2[\braket{e_0|e_3} + \braket{e_1|e_2}])\notag\\
&+2p\cdot Re( q_0q_1\alpha^3\beta[\braket{e_0|f_1} + \braket{e_1|f_0}])\notag\\
&+2p\cdot Re(q_1q_2\alpha^2\beta^2[\braket{f_0|f_3} + \braket{f_1|f_2}])\notag\\
&+2p\cdot Re( q_2q_3\alpha\beta^3[\braket{e_2|f_3} + \braket{e_3|f_2}]).\notag
\end{align}
Solving for the term involving $Re\braket{e_0|e_3}$ (which is the quantity we are currently interested in bounding) yields:
\begin{align}
&q_0q_3\alpha^2\beta^2Re\braket{e_0|e_3}\label{eq:e0e3}\\
&=\frac{1}{2p}(\prr_{a,R,a} - \prf_{a,0}\prr_{a,0,a} - \prf_{a,1}\prr_{a,1,a})\notag\\
&- (\alpha^2-\beta^2)Re\braket{g_1|g_3} - \chi\notag
\end{align}
where:
\begin{align}
\chi &= Re(q_0q_1\alpha^3\beta[\braket{e_0|f_1} + \braket{e_1|f_0}])\notag\\
&+ Re(q_0q_3\alpha^2\beta^2\braket{e_1|e_2})\notag\\
&+ Re(q_1q_2\alpha^2\beta^2[\braket{f_0|f_3} + \braket{f_1|f_2}])\notag\\
&+ Re(q_2q_3\alpha\beta^3[\braket{e_2|f_3} + \braket{e_3|f_2}]).\notag
\end{align}
$A$ and $B$ do not have sufficient quantum capabilities to fully bound $\chi$; however we can bound it based on what we already know and using the Cauchy-Schwarz inequality, namely:
\begin{align}
|\chi| &\le q_0q_1\alpha^3\beta[(1-Q_R) + Q_R]\label{eq:chi}\\
&+ q_0q_3\alpha^2\beta^2\sqrt{Q_R\bk{e_2}}\notag\\
&+ q_1q_2\alpha^2\beta^2\sqrt{Q_R\bk{f_3}}\notag\\
&+ q_1q_2\alpha^2\beta^2\sqrt{(1-Q_R)(1-\bk{f_3})}\notag\\
&+ q_2q_3\alpha\beta^3\sqrt{\bk{e_2}\bk{f_3}}\notag\\
&+ q_2q_3\alpha\beta^3\sqrt{(1-\bk{e_2})(1-\bk{f_3})}.\notag
\end{align}
(Note that, above, we used the fact that $\bk{e_3} = 1-\bk{e_2}$ and $\bk{f_2} = 1-\bk{f_3}$.)  Upper-bounds on $\bk{e_2}$ and $\bk{f_3}$ were already derived in Equations \ref{eq:e2} and \ref{eq:f3}.

Finally, we claim $A$ and $B$ can observe $Re\braket{g_1|g_3}$ by considering the statistic $\prr_{a,R,0}$; that is, the probability that $A$'s measurement produces outcome ``$0$'' conditioned on the event she initially sent $\ket{a}$ and $B$ chose $\R$.  Indeed, tracing the qubit in this case, it is not difficult to see that:
\[
U_R\mathcal{F}\ket{a} = \ket{0}(\ket{g_1} + \ket{g_3}) + \ket{1}(\ket{g_0}+\ket{g_2}),
\]
from which we attain:
\begin{align}
\prr_{a,R,0} &= p||\ket{g_1} + \ket{g_3}||^2\notag\\
&= p(\bk{g_1} + \bk{g_3} + 2Re\braket{g_1|g_3})\notag\\
\Rightarrow  Re\braket{g_1|g_3} &= \frac{1}{2}\left(\frac{\prr_{a,R,0}}{p} - \bk{g_1} - \bk{g_3}\right).
\end{align}
Since $\bk{g_i}$ are all observable (see Equation \ref{eq:gi}), this completes our bound.

This completes our lower-bound on $S(A|E)$.  To summarize, given as input $\alpha$ along with those observable statistics as utilized above, one must simply minimize Equation \ref{eq:SAE} over all $q_3$, $\bk{e_2}$, and $\bk{f_3}$, as enforced by Equations \ref{eq:q3}, \ref{eq:e2}, and \ref{eq:f3}.  For any particular choice of these values, one may compute a bound on $\chi$ from Equation \ref{eq:chi}; one may also compute a bound on $Re\braket{e_0|e_3}$ using Equation \ref{eq:e0e3}.  This then allows one to bound $Re^2\braket{e_0|g_0}$, using Equation \ref{eq:e0g0} which gives a possible value of $S(A|E)$.  Minimizing over $\bk{e_2}$, $\bk{f_3}$, and $q_3$ gives a worst-case lower-bound on $S(A|E)$ over all attacks which induce the observed statistics.  This is a simple minimization problem allowing one to evaluate the key-rate numerically.

Note that if $\alpha = 0$ (i.e., the protocol is classical), then it is easy to check that Equation \ref{eq:e0e3} becomes simply $0 = 0$, regardless of the choice of $\braket{e_0|e_3}$ (i.e., Eve may set this inner-product arbitrarily and Equation \ref{eq:e0e3} will be satisfied).  It is also clear that $\braket{e_0|g_0} = q_3\braket{e_0|e_3}$.  Thus, Eve may set $\braket{e_0|e_3} = 0$ in this case resulting in the entropy $S(A|E) = 0$ as expected.  That is, in the classical case, Eve has no uncertainty on $A$ and $B$'s raw key and so the protocol is insecure.  The interesting question is what happens when $\alpha > 0$?

To finish the key-rate computation (and answer this question), we also need $H(A|B)$, however this value is easily found:
\begin{align}
&H(A|B)\\
&= H\left(\frac{q_0^2}{2}, \frac{q_1^2}{2}, \frac{\bk{g_0}+\bk{g_1}}{2}, \frac{\bk{g_2}+\bk{g_3}}{2}\right)\notag\\
&- H\left(\frac{q_0^2 + \bk{g_2}+\bk{g_3}}{2}\right)\notag\\\notag\\
&=H\left(\frac{\prf_{0,0}}{2}, \frac{\prf_{0,1}}{2}, \frac{\prf_{a,1}}{2}, \frac{\prf_{a,0}}{2}\right)\notag\\
&- H\left(\frac{\prf_{0,0} + \prf_{a,0}}{2}\right)\notag
\end{align}
thus completing the key-rate computation.

\section{Evaluation}
To evaluate our protocol, and more importantly to see the effect of $\alpha$ on the secure key-rate, we must put values to those observable statistics $\prf_{\cdot,\cdot}$ and $\prr_{\cdot,\cdot,\cdot}$.  We will assume a symmetric attack parameterized by noise values $Q_F$ (in the forward channel), $Q_R$ (in the reverse), and $Q_X$ (for the ``loop'' channel when $B$ reflects), where:
\begin{align*}
\prf_{0,0} = 1-Q_F && &\prf_{0,1} = Q_F\\
\prr_{0,0,0}/p = 1-Q_R && &\prr_{0,1,0}/p = Q_R\\
\prr_{a,0,0}/p = 1-Q_R && &\prr_{a,1,0}/p = Q_R\\
\prr_{a,R,a}/p = 1-Q_X.
\end{align*}
and where $p = \frac{1}{1+\alpha}$, the maximal allowed value as discussed earlier.

To put values to the mismatched events, we model the channel as a depolarization channel, a common approach when evaluating (S)QKD protocols.  This is not a requirement of our security proof of course, simply a way to put realistic (i.e., physically realizable) numbers to the observable parameters in order to evaluate the key-rate.  A depolarization channel with parameter $Q$ is simply the map:
\[
\mathcal{E}_Q(\rho) = (1-2Q)\rho + Q\cdot I.
\]
From this, we find:
\begin{align*}
\prf_{a,0} &= (1-2Q_F)\alpha^2 + Q_F\\
\prf_{a,1} &= (1-2Q_F)\beta^2 + Q_F\\
\prr_{a,0,a}/p &= (1-2Q_R)\alpha^2 + Q_R\\
\prr_{a,1,a}/p &= (1-2Q_R)\beta^2 + Q_R\\
\prr_{a,R,0}/p &= (1-2Q_Z)\alpha^2 + Q_Z.
\end{align*}

As expected, the noise tolerance of this protocol is low, however we are able to attain positive key-rates as shown in Figures \ref{fig:graph3}, \ref{fig:graph1}, and \ref{fig:graph2}.  It is clear from these figures that the forward channel noise is the most important statistic - indeed, as shown in Figure \ref{fig:graph3}, the protocol can tolerate a high level of reverse and ``loop'' noise (approaching $10\%$).  However, as shown in Figure \ref{fig:graph2}, if the forward channel increases too much (even by a small amount), there are only a few choices for $\alpha$ where a positive key-rate can be attained (and that key-rate is still low).  Unless the reverse channel noise is very large, the optimal choice for $\alpha$ ranged between $0.175$ and $0.2$ for those evaluations we performed.  For small $Q_F$ and high $Q_R$ and $Q_X$, as in Figure \ref{fig:graph3}, the optimal value of $\alpha$ is slightly lower, ranging between $0.13$ and $0.16$.

Despite the low noise tolerance, we still consider this a positive, and interesting, result as this protocol was designed specifically to smoothly transform from classical to quantum communication and to allow research in investigating how this affects secure communication.  Of course, our key-rate is a lower bound, so the actual security rate can only be higher.  Further studying this would make interesting future work.

\begin{figure}
  \centering
  \includegraphics[width=200pt]{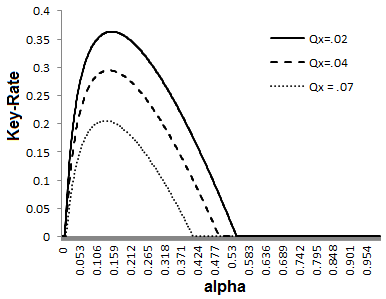}
  \caption{Key-rate when the forward channel noise is close to zero ($10^{-5}$) and the reverse and loop noise levels are high.  We see that the forward channel noise is the most critical for this protocol.}\label{fig:graph3}
\end{figure}

\begin{figure}
  \centering
  \includegraphics[width=200pt]{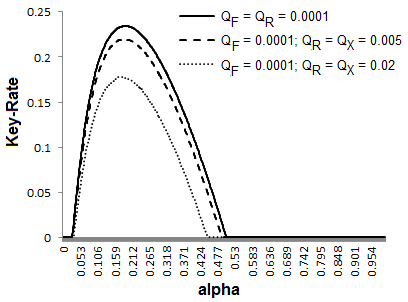}
  \caption{Key-rate for low forward channel noise (though higher than Figure \ref{fig:graph3}) and increasing reverse and loop noise.}\label{fig:graph1}

\end{figure}
\begin{figure}
  \centering
  \includegraphics[width=200pt]{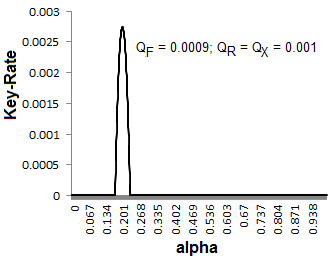}
  \caption{Key-rate when the forward channel noise is increased - only a small window of $\alpha$ values exist in this case when the protocol attains a positive key-rate.}\label{fig:graph2}
\end{figure}


\section{Comments on Further Restrictions}

One natural question for future work is: can the requirements of this protocol be reduced even further?  That is, can $A$ have even more restrictions placed on her quantum abilities?  One clear direction is to attempt to remove $A$'s POVM and replace it with a single basis measurement, measuring in the $\mathcal{A} = \{\ket{a}, \ket{\bar{a}}\}$ basis (where $\braket{\bar{a}|a} = 0$).  However, this removes certain key statistics that we relied on in our security proof.  While we attempted to analyze this protocol, a full security proof remains elusive.

We do, however, conjecture that this even more restricted protocol is secure. To provide at least some evidence in support of this, we were able to analyze a particular intercept-resend attack and show that the protocol is secure against this.  The attack we consider is one which induces no additional noise in the channel (that is, it is undetectable).  To remain hidden from $A$ and $B$, Eve simply measures the reverse channel in the $\mathcal{A}$ basis (the same basis $A$ uses, thus $E$ will have the same information as $A$ does from the reverse channel - but, importantly, not the forward channel).  If this measurement results in outcome $\ket{a}$, $E$ guesses the raw key bit is $1$; otherwise she guesses it is $0$ (note that if $\alpha = 0$ this guess is always correct and so, in that case, the protocol is insecure as expected).  We compute the values $\pe_{i,j}$ for $i,j\in\{0,1\}$ which we use to denote the probability that $A$'s raw key bit is $i$ and $E$'s guess is $j$ assuming she uses this attack.  From this we can compute the key-rate equation for any $\alpha$.

The attack schematic is shown in Figure \ref{fig:ir-atk}; the key-rate for various $\alpha$ is shown in Figure \ref{fig:ir-atk-keyrate}.  We notice that the key-rate is positive for all $\alpha \in (0,1)$ (of course it is insecure if $\alpha=0$ or $1$).  The optimal choice for $\alpha$ in this event is $\alpha = 0.5$ (contrast this with the ``full'' protocol we analyzed in this paper where the optimal was usually around $0.2$).  Also note the asymmetry in the key-rate graph.

Of course this is only showing some evidence that this further restriction (i.e., removing $A$'s ability to use POVM $\Lambda$ as we considered in our protocol in this work) may result in a secure protocol.  A complete analysis we leave as interesting future work.

\begin{figure}
  \centering
  \includegraphics[width=200pt]{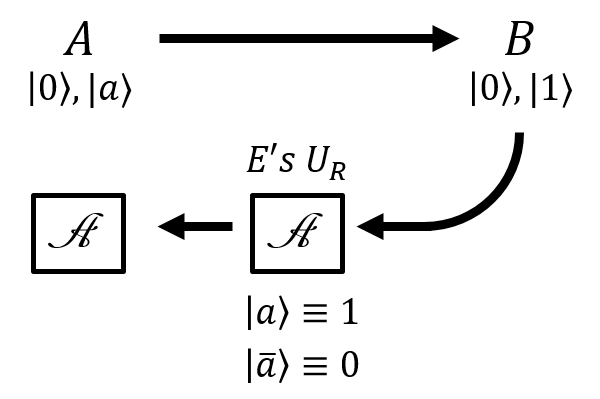}
  \caption{An intercept-resend attack against an even more restricted protocol than the one we analyzed here.  This attack induces no observable noise in the quantum channel; Eve simply measures in the same basis that $A$ will.  If she observes $\ket{a}$, she will guess that the raw key is $1$; otherwise she guesses it is $0$ (for $\alpha=0$ this gives Eve full information).}\label{fig:ir-atk}
\end{figure}

\begin{figure}
  \centering
  \includegraphics[width=250pt]{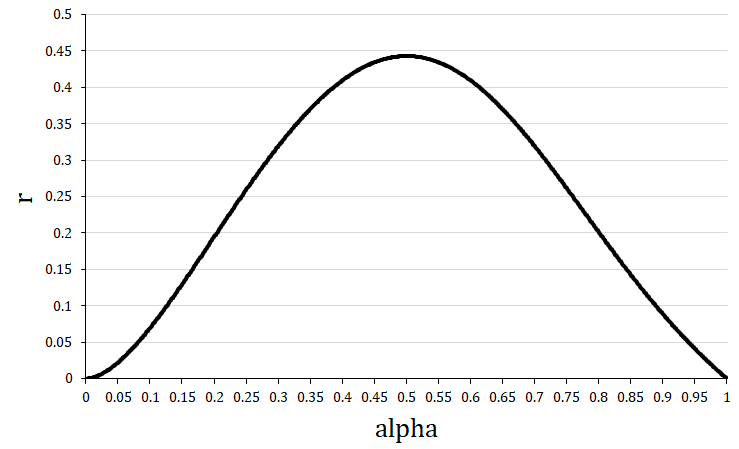}
  \caption{Showing how the parameter $\alpha$ affects this further restricted protocol's key-rate $r$ (where POVM $\Lambda$ is replaced with only a single basis measurement) when $E$ uses the intercept-resend attack discussed in the text.  Note the slight asymmetry in the graph.}\label{fig:ir-atk-keyrate}
\end{figure}

\section{Closing Remarks}

In this paper, we developed a new SQKD protocol with a tuneable parameter $\alpha$ allowing one to gauge the effect of the secure communication rate, based on ``how quantum'' the protocol is.  When $\alpha$ is set to zero, the communication is purely classical and thus the protocol is insecure.  As $\alpha$ increases, security can be attained for certain optimal choices and for certain channels.  Studying the protocol further may help to shed light on the ``gap'' between quantum and classical secure communication.  Furthermore, our proof approach may be applicable to other (S)QKD protocols where users are highly restricted in their quantum capabilities (either intentionally or due, perhaps, to hardware faults).

Many interesting future problems remain open.  Obviously the noise tolerance of our protocol is very low - though, we stress that we are only interested in this protocol from a theoretical perspective and in discovering when, or even if, this protocol can be secure (and our answer is in the affirmative).  However, it would be interesting to try to improve on this. Our bound may be improved by attempting to bound all terms appearing in Equation \ref{eq:SAE-full} (we only used the lower bound from Equation \ref{eq:SAE}).  Other mismatched statistics may help here.  Also, studying the effect of $\alpha$ against different forms of attacks (e.g., practical intercept-resend attacks) may also be highly beneficial and interesting.

Another interesting question is whether we can reduce the resource requirements of the users even further.  As commented on in the previous section, we attempted to analyze the case where $A$'s measurement capabilities are further reduced than what we used in this paper; so far, however, a full proof of security in that case remains an open problem.

\textbf{Acknowledgments:} AG would like to acknowledge the support of National Science Foundation grant number 1659764, which supported her during a summer REU at the University of Connecticut.  WK and HI are partially supported by the NSF under grant number 1812070.



\end{document}